\spnewtheorem{observation}{Observation}{\bfseries}{\itshape}
\spnewtheorem*{examplenonum}{Example}{\it}{\rm}
\begin{document}

\title{Restricted Common Superstring and Restricted Common Supersequence}

\author{
Rapha\"el Clifford\inst{1}, Zvi Gotthilf\inst{2}, Moshe
Lewenstein\inst{2} \and Alexandru Popa\inst{1}}

\institute{
Department of Computer Science, University of Bristol, UK\\
\email{\{clifford,popa\}@cs.bris.ac.uk} \and Department of
Computer Science, Bar-Ilan University, Ramat Gan 52900, Israel
\email{\{gotthiz,moshe\}@cs.biu.ac.il}}

\maketitle

\begin{abstract}


The {\em shortest common superstring} and the {\em shortest common
supersequence} are two well studied problems having a wide range of
applications. In this paper we consider both problems with resource constraints,
denoted as the Restricted Common Superstring (shortly
\textit{RCSstr}) problem and the Restricted Common Supersequence
(shortly \textit{RCSseq}). In the \textit{RCSstr}
(\textit{RCSseq}) problem we are given a set $S$ of $n$ strings,
$s_1$, $s_2$, $\ldots$, $s_n$, and a multiset $t = \{t_1, t_2,
\dots, t_m\}$, and the goal is to find a permutation $\pi : \{1,
\dots, m\} \to \{1, \dots, m\}$ to maximize the number of strings
in $S$ that are substrings (subsequences) of $\pi(t) =
t_{\pi(1)}t_{\pi(2)}...t_{\pi(m)}$ (we call this ordering of the
multiset, $\pi(t)$, a permutation of $t$). We first show that in
its most general setting the \textit{RCSstr} problem is {\em
NP-complete} and hard to approximate within a factor of
$n^{1-\epsilon}$, for any $\epsilon
> 0$, unless P = NP. Afterwards, we present two
separate reductions to show that the \textit{RCSstr} problem
remains NP-Hard even in the case where the elements of $t$ are drawn from a binary alphabet or for the
case where all input strings are of length two. We then present
some approximation results for several variants of the
\textit{RCSstr} problem. In the second part of this paper, we turn
to the \textit{RCSseq} problem, where we present some hardness
results, tight lower bounds and approximation algorithms.
\end{abstract}


\section{Introduction}

\subsection{Motivation}

In AI planning research it is very important to exploit the
interactions between different parts of plans. This was observed
early in the area~\cite{Sacerdoti77,Tate77,Wilkins88}. One very
important type of interaction is the \emph{merging} of different
actions to make the total plan more efficient.

In the general setting we have a set of goals (or tasks) which
have to be accomplished and we want to find the most cost
efficient plan which achieves all the goals. This problem is also
known as the \emph{shortest common superstring} in the case that every
goal has to be done continuously or the \emph{shortest common
supersequence} if we can abandon a task and resume its process
later. In both problems we assume that we have an unlimited set of
resources and we want to achieve all our goals. Of course, in real
life this is never the case: our resources are always limited.

Therefore, a more realistic question is: given a fixed
set of resources, how many goals can be achieved (continuously or
not)?

It seems that most of the applications of the shortest
common superstring and the shortest common supersequence
problem, are more suitable for the case of limited resources. The
main challenge for such applications is to find the best
arrangement that will lead us to accomplish the maximum number of
goals.

As an example, Wilensky~\cite{Wilensky83} gives the scenario where
John is planning to go camping for a
week. He goes to the supermarket to buy a week's worth of
groceries. John has to achieve a set of goals (i.e. to buy food
for meals during the camping weekend) and he is able to merge
some goals (i.e. to buy different products during a single trip to
a supermarket) in order to make the plan more efficient. 

Another application, from the computational biology area, is the
case where only the set of amino acids can be determined and not
their precise ordering. Here we want to know which ordering would
maximize the number of short strings which can be substrings or
subsequences of some ordering of the symbols in a given text.

\subsection{Previous work}

In the shortest common supersequence we are given a set $S$ of $n$
strings, $s_1$,$s_2$,$\ldots$,$s_n$ and we want to find the
shortest string that is a supersequence of every string in $S$.
For arbitrary $n$ the problem is known to be
NP-Hard~\cite{Maier78} even in the case of a binary
alphabet~\cite{RaihaU81}. However for fixed $n$ a dynamic
programming approach takes polynomial time and space. The shortest
common supersequence problem has been studied extensively both
from a theoretical point of
view~\cite{JiangL95,Middendorf93,PEV2002,RubinovT98}, from an
experimental point of view~\cite{BaroneBVM01,Cott05} and from the
perspective of its wide range of applications in data
compression~\cite{Storer88}, query optimization in database
systems~\cite{Sellis98} and text editing~\cite{SankoffK83}.

In the shortest common superstring problem we are given a set $S$
of $n$ strings, $s_1, s_2, \ldots, s_n$ and we want to find the
shortest string that is a superstring of every string in $S$. For
arbitrary $n$ the problem is known to be {\em
NP-Complete}~\cite{GJ79} and APX-hard~\cite{BlumJLTY94}. Even for
the case of binary alphabet Ott~\cite{Ott99}  presented
 lower bounds for the achievable approximation ratio. The best known approximation
ratio so far is 2.5~\cite{KaplanLSS05,Sweedyk99}.

\subsection{Our contributions}

We consider the complexity and the approximability of two problems
which are closely related to the well-known shortest common
superstring and shortest common supersequence problems.

\begin{problem}(Restricted Common Superstring (Supersequence))
The input consists of a set $S = \{s_1, s_2, \dots, s_n\}$ of $n$
strings over an alphabet $\Sigma$ and a multiset $t = \{t_1, t_2,
\dots, t_m\}$ over the same alphabet. The goal is to find an
ordering of the multiset $t$ that maximizes the number of strings in
$S$ that are a substring (subsequence) of the ordered multiset. We denote this
ordering by $\pi(t) = t_{\pi(1)}t_{\pi(2)}...t_{\pi(m)}$ (and we
say that $\pi(t)$ is a permutation of $t$). If all the strings in
$S$ have length at most $\ell$, we refer to the problem as
\textit{RCSstr[$\ell$]} \textit{(RCSseq[$\ell$])}. For simplicity
of presentation, we assume throughout that all the input strings
are distinct and every string $s_i \in S$ is a substring of at
least one permutation $\pi(t)$.
\end{problem}

\begin{example}
Let multiset $t = \{a, a, b, b, c, c\}$ and  set $S =
\{abb,$ $ bbc, cba, aca\}$ be an instance of \textit{RCSstr}
(and also of \textit{RCSstr[3]}). In this example the maximum number of
strings from $S$ that can be a substring of a permutation of $t$
is 3. One such possible permutation is $\pi(t) = acabbc$ which
contains the strings $aca$, $abb$, $bbc$ as substrings.
\end{example}

\begin{example}
Let multiset $t = \{a, a, b, c\}$ and  set $S =
\{ab,bc,cb,ca\}$ be an instance of \textit{RCSseq} and also
\textit{RCSseq[2]}. In this example the maximum number of strings
from $S$ that can be a subsequence of a permutation of $t$ is 3.
One such possible permutation is $\pi(t) = abca$ which contains
the strings $ab$, $bc$, $ca$ as a subsequence.
\end{example}

The paper is organized as follows. In Section~\ref{RCS_hardness}
we study the hardness of the \textit{RCSstr} problem. We show
first that in its most general setting the \textit{RCS} problem is
{\em NP-complete} and hard to approximate within a factor
of less than $n^{1-\epsilon}$, for any $\epsilon > 0$, unless P = NP.
Then, we show that even if all input strings are of length two
(\textit{RCSstr[2]}) and $t$ is a set, i.e. no symbols are repeated, then the \textit{RCSstr} problem is {\em
APX-Hard}. Afterwards, we prove that the \textit{RCSstr} problem
remains NP-Hard even in the case of a binary alphabet.

In Section~\ref{RCS_approx}, we design approximation algorithms
for several restricted variants of the \textit{RCSstr} problem. We
first present a $3/4$ approximation algorithm for the
\textit{RCSstr[2]} problem where $t$ is a set.
Moreover, we give a $1/(\ell(\ell(\ell+1)/2 - 1))$-approximation
algorithm for \textit{RCSstr[$\ell$]}, when $\ell$ is the length of
the longest input string.

The \textit{RCSseq} problem is studied in Section~\ref{PCS}.
In Section~\ref{PCS_hardness} we show that the hardness results
for \textit{RCSstr} hold also for \textit{RCSseq}. Moreover, we
show an approximation lower bound of $1/\ell!$ when $\ell$ is the
length of the longest input string.

In Section~\ref{PCS_approx}, we present approximation
algorithms for two variants of the \textit{RCSseq} problem. The
first is a $(1 + \Omega(1 / \sqrt{\Delta}))/2$ approximation
algorithm for \textit{RCSstr[2]}, where $\Delta$ is the number of
occurrences of the most frequent character in $S$. Then, for
\textit{RCSseq} we show that a selection of an arbitrary
permutation, $\pi(t)$, yields a $1/\ell!$ randomized approximation
algorithm, thus matching the lower bound presented in
Section~\ref{PCS_hardness}.

\section{\textit{RCSstr}}

\subsection{Hardness of the \textit{RCSstr}}
\label{RCS_hardness}

In this section we present hardness results for several variants
of the \textit{RCSstr} problem.

We show here that \textit{RCSstr} problem is {\em NP-complete} and
hard to approximate within a factor better than $n^{1-\epsilon}$,
for any $\epsilon > 0$, unless P = NP. To do so, we present an
approximation-preserving reduction from the classical
\textit{maximum clique} problem.

\begin{definition}(Maximum Clique)
Given an undirected graph $G = (V,E)$ the {\em maximum clique}
problem is to find a vertex set $V' \subseteq V$ of maximum
cardinality, such that for every two vertices in $V'$, there
exists an edge connecting the two.
\end{definition}

The following seminal hardness result will be useful.

\begin{theorem}
\label{th1} \cite{Zuckerman07} The maximum clique problem does not
have an $n^{1-\epsilon}$ approximation, for any $\epsilon > 0$,
unless P = NP.
\end{theorem}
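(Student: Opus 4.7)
The plan is to derive this hardness from the PCP theorem via the FGLSS reduction, following the program refined by H\aa stad and others, with Zuckerman's contribution being the removal of randomness from the reduction. First I would invoke a strong PCP system for SAT: a verifier $V$ that uses $r$ random bits, queries $q$ positions of a claimed proof, always accepts correct proofs of $x \in \mathrm{SAT}$, and accepts any proof of $x \notin \mathrm{SAT}$ with probability at most $s$. For the target $n^{1-\epsilon}$ bound, the verifier must have $r = O(\log n)$ together with soundness roughly $s \le 2^{-(1-\epsilon) r}$.

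Given such a verifier $V$, the next step is the FGLSS reduction. I would build a graph $G_V$ whose vertices are pairs $(R, \alpha)$ consisting of a random string $R \in \{0,1\}^r$ of $V$ together with an accepting local assignment $\alpha$ to the $q$ queries that $V$ asks on randomness $R$, and put an edge between $(R, \alpha)$ and $(R', \alpha')$ exactly when $\alpha$ and $\alpha'$ agree on every proof position they both read. Cliques in $G_V$ correspond exactly to globally consistent partial proofs, so $x \in \mathrm{SAT}$ yields $\omega(G_V) \ge 2^{r}$ while $x \notin \mathrm{SAT}$ forces $\omega(G_V) \le s \cdot 2^{r}$. The graph has $N := |V(G_V)| \le 2^{r} \cdot 2^{q}$ vertices, which is polynomial in $n$ when $r, q = O(\log n)$, and the multiplicative gap of $1/s \ge 2^{(1-\epsilon) r}$ translates into an inapproximability factor of $N^{1-\epsilon'}$ after absorbing the $2^{q}$ overhead.

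The main obstacle -- and the actual content of Zuckerman's theorem compared to earlier work -- is producing a \emph{deterministic} reduction that instantiates a PCP with the above soundness-randomness tradeoff. H\aa stad had previously established an analogous $N^{1-\epsilon}$ bound under the stronger hypothesis $\mathrm{NP} \not\subseteq \mathrm{ZPP}$, using randomized parallel repetition that spends truly random bits to choose which repetitions to run. Zuckerman's contribution is to replace that random sampling by an explicit randomness extractor (equivalently, a disperser), so that the verifier still drives the soundness down exponentially while consuming only $O(\log n)$ true random bits; this upgrades the conclusion from $\mathrm{NP} \not\subseteq \mathrm{ZPP}$ to $\mathrm{P} \ne \mathrm{NP}$. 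The delicate technical heart is to show that the extractor-based sampler preserves the FGLSS gap, i.e.\ that switching from independent to dependent sampling does not inflate the soundness parameter.
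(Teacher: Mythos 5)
The paper does not prove this statement at all: Theorem~\ref{th1} is imported as a black box from Zuckerman's 2007 paper, so there is no internal proof to compare your attempt against. Judged on its own terms, what you have written is an accurate roadmap of how the cited result is actually established in the literature --- PCP verifier, FGLSS graph, and Zuckerman's disperser-based derandomization of H{\aa}stad's amplification --- but it is an architectural outline rather than a proof. The two genuinely hard ingredients are named and then deferred: (i) the existence of a PCP for SAT whose parameters are strong enough for the reduction, and (ii) the explicit disperser construction together with the argument that dependent sampling preserves soundness. Both of these are major theorems in their own right, and without them the outline proves nothing.

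One concrete place where your parameter accounting is too casual: you say the $2^{q}$ overhead in the vertex count $N \le 2^{r+q}$ can be ``absorbed.'' The gap you obtain is $1/s \approx 2^{(1-\epsilon)r}$ while the target is $N^{1-\epsilon'} \approx 2^{(1-\epsilon')(r+q)}$, so absorbing the overhead for \emph{every} $\epsilon' > 0$ forces $q$ (more precisely, the free-bit complexity $f$, since the number of accepting local views per random string is $2^{f}$, not $2^{q}$) to be an arbitrarily small fraction of $\log(1/s)$. Getting amortized free-bit complexity to tend to zero is precisely the content of H{\aa}stad's clique paper and is not delivered by a generic PCP with $r = O(\log n)$ and low soundness. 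If you intend this as a proof rather than a citation, that step needs to be either proved or explicitly invoked as a named prior result with the correct free-bit parameters.
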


We can now present our main hardness result of the \textit{RCSstr}
problem.

\begin{theorem}
\label{th2} \textit{RCSstr} is {\em NP-complete} and hard to
approximate within a factor of $n^{1-\epsilon}$, for any $\epsilon >
0$, unless P = NP.
\end{theorem}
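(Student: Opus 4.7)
The plan is to establish both claims via an approximation-preserving reduction from \textsc{Maximum Clique}, then invoke Theorem~\ref{th1}. Membership of \textit{RCSstr} in NP is immediate: a candidate permutation $\pi$ of the multiset $t$ is a polynomial-size certificate, and for each $s_i \in S$ one can test in polynomial time whether $s_i$ is a substring of $\pi(t)$, so the count of satisfied strings is verified in polynomial time.

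Given a graph $G = (V,E)$ with $|V|=v$, I would construct an instance $(S,t)$ in which each vertex $v_i$ contributes a single string $s_i$, so $|S|=v$, and in which the optimum equals the clique number $\omega(G)$. The construction will use one alphabet symbol $a_i$ per vertex together with a few auxiliary separator symbols, with $s_i$ designed to encode both the identity of $v_i$ and a local pattern involving its neighbourhood. The multiset $t$ is then fixed so that the shared separators and vertex symbols are scarce enough that two distinct strings $s_i, s_j$ can be realized simultaneously as substrings of some permutation of $t$ only when $(v_i, v_j) \in E$; non-adjacent pairs would be forced into incompatible positions in any permutation.

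The correctness argument splits into two directions. For completeness, given any clique $C \subseteq V$ of size $k$, I plan to exhibit an explicit permutation of $t$ in which every $s_i$ with $v_i \in C$ appears as a substring, yielding $\mathrm{OPT}(S,t) \ge \omega(G)$. For soundness, from any permutation achieving $k$ substring matches I would read off $k$ vertices that are pairwise adjacent in $G$, giving $\mathrm{OPT}(S,t) \le \omega(G)$. Since $|S|=\Theta(v)$, any polynomial-time $n^{1-\epsilon}$-approximation for \textit{RCSstr} would then translate into a $v^{1-\epsilon}$-approximation for \textsc{Maximum Clique}, contradicting Theorem~\ref{th1} unless P = NP; the claimed NP-completeness is then obtained as the decision version by taking $k = v$.

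The main obstacle is the soundness direction of the gadget analysis. Completeness is essentially a direct construction of a permutation from a clique, but ruling out "cheating" permutations that satisfy strings for a non-clique set requires showing that the forced alignments of $a_i$ with the scarce separators in any occurrence of $s_i$ genuinely collide whenever $v_i$ and $v_j$ are non-adjacent. Making this rigidity hold while keeping the alphabet polynomial in $v$ and $|t|$ only polynomially larger than $v$ is the crux of the gadget design, and is what drives the exact form the strings $s_i$ and the counts in $t$ must take.
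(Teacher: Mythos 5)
Your overall strategy---an approximation-preserving reduction from \textsc{Maximum Clique} with one string per vertex, so that the optimum of $(S,t)$ equals $\omega(G)$ and Theorem~\ref{th1} transfers directly---is exactly the strategy of the paper. But the proposal stops short of a proof: you never specify the strings $s_i$ or the multiset $t$, and you explicitly defer ``the crux of the gadget design'' (the soundness direction) to future work. Since the entire content of this theorem is that gadget, what you have written is a proof outline with the key step missing, not a proof. For comparison, the paper's construction is quite simple and uses no separator symbols at all: the alphabet is $V$ itself, $t$ contains $n$ copies of each vertex symbol, and $s_i = v_i^{n}\cdot d(v_i)$ where $d(v_i)$ is the ordered list of \emph{non}-neighbours of $v_i$. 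Rigidity then comes purely from counting: if $(v_i,v_j)\notin E$, any common superstring of $s_i$ and $s_j$ must contain a $v_j$ after $n$ consecutive $v_i$'s and a $v_i$ after $n$ consecutive $v_j$'s, forcing at least $n+1$ occurrences of one of the two symbols, which $t$ cannot supply; conversely, for a clique the plain concatenation of the corresponding $s_i$'s respects all multiplicities. Your proposal's worry about keeping the alphabet polynomial evaporates once you see that vertex symbols with multiplicity $n$ suffice.

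A secondary error: you claim NP-completeness follows ``by taking $k=v$'' in the decision version. With $|S|=v$ that question is just whether $G$ is a complete graph, which is decidable in polynomial time. The decision version you need asks whether at least $k$ strings of $S$ can be realized for a general threshold $k$, and its NP-hardness follows from the same reduction applied to \textsc{Clique} with its own parameter $k$. Membership in NP is argued correctly.
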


\begin{proof}

We present an approximation-preserving reduction from the maximum
clique problem to the \textit{RCSstr} problem. Given an undirected
graph $G = (V,E)$, where $V=\{v_1, v_2, \ldots, v_n\}$, we construct an instance
$(S,t)$ of the \textit{RCSstr} problem in the following way.

Set $t$ to be $\{v_1^n, v_2^n, \ldots, v_n^n\}$ and for each vertex $v_i \in V$ define a string $s_i \in S$ as follows. Set $d(v_i)$ to be the ordered sequence of the vertices not adjacent to $v_i$. Set $s_i$ to be $v_i^{n} \cdot d(v_i)$, where $\cdot$ denotes concatenation.


We now prove that the optimal solution of the \textit{RCSstr}
instance $(S,t)$ has size $x$ if and only if the optimal solution
of maximum clique problem on the graph $G$ has size $x$.

Let $\pi$ be a permutation on the multiset $t$ and let $A
\subseteq S$ be all the strings that are substrings of $\pi(t)$.
Denote by $A'$ the set of vertices in $G$ corresponding to the set
of strings $A$. We prove that the vertices in $A'$ form a clique.
Suppose that this is not true and there exist two vertices $v_i,
v_j \in A'$ such that $(v_i,v_j) \notin E$. Note that, in any
common superstring of the strings $s_i$ and $s_j$ either $v_i$ or
$v_j$ must have at least $n+1$ occurrences, since $v_i$ is not
present in the neighbors list of $v_j$ and vice versa. This is a contradiction since the multiset $t$ has only $n$ copies
of each character. Therefore the set of vertices $A'$ forms a
clique.

On the other hand, let $A' = \{v_1, \dots, v_k\} \subseteq V$ be a
clique and let $A = \{s_1, \dots, s_k\} \subseteq S$ be the set of
corresponding strings. We can find a permutation of $t$ which
contains all the strings in $A$ as a substring by concatenating
$s_1, \dots, s_k$ and appending the remaining characters
arbitrarily at the end. No character is used more than $n$ times
since the vertices from $A'$ form a clique and, therefore, $v_i
\notin d(v_j)$ for any $v_i,v_j \in A'$.

Thus, the \textit{RCSstr} problem is {\em NP-complete} and hard to
approximate within a factor  $n^{1-\epsilon}$, for any $\epsilon >
0$, unless P = NP. \qed

\end{proof}

We now show that the \textit{RCSstr[2]} problem is APX-Hard even if
$t$ is a set, i.e. each character in $t$ is unique. To do so, we
present an approximation-preserving reduction from the classical
\textit{Asymmetric maximum TSP} problem with edge weights of $0$
and $1$.

\begin{definition}(Maximum Asymmetric Travelling Salesman
Problem)\\
Given a complete weighted directed graph $G = (V,E)$ the {\em
Maximum Travelling Salesman Problem} is to find a closed tour of
maximum weight visiting all vertices exactly once.
\end{definition}

\begin{theorem}
\label{th3} \cite{EngebretsenK01} For any constant $\epsilon > 0$,
it is NP-Hard to approximate the Maximum Asymmetric Travelling
Salesman with $0$, $1$ edge weights within $320/321 + \epsilon$.
\end{theorem}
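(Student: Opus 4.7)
The plan is to derive this inapproximability via a gap-preserving reduction from a bounded-occurrence constraint satisfaction problem, using the PCP theorem and H\r{a}stad-style hardness as a black box. Specifically, I would start from a CSP (such as a bounded-degree version of Max-E3-Lin-2, Max-EkSAT, or a hybrid CSP combining parity and disjunction constraints) for which there are known constants $c > s$ with the property that it is NP-hard to distinguish instances in which a $c$-fraction of constraints are simultaneously satisfiable from those in which no assignment satisfies more than an $s$-fraction. The bounded-occurrence hypothesis is essential: it guarantees that the per-variable gadgets constructed below have constant size, so the instance blows up only by a constant factor and the CSP gap is preserved (up to explicit constants) in the resulting ATSP instance.

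For the reduction itself, I would build a complete weighted digraph $G'$ with $\{0,1\}$ weights as follows. Each Boolean variable $x_i$ becomes a \emph{variable gadget}: a small constant-size subgraph that admits exactly two Hamiltonian traversals on weight-$1$ edges, one encoding $x_i = \mathrm{true}$ and the other $x_i = \mathrm{false}$, with any other traversal forced to use at least one weight-$0$ edge. Each constraint $C_j$ becomes a \emph{constraint gadget} containing one ``pickup'' weight-$1$ edge per literal, usable only when reached through the satisfying side of the corresponding variable gadget. All of these are stitched onto a single backbone of weight-$1$ edges forming a prescribed Hamiltonian cycle template, and every remaining edge of the complete digraph is assigned weight $0$. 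A tour's total weight then equals the backbone weight $A$ plus the number of literal pickups it can legitimately exercise, so an assignment satisfying an $\alpha$-fraction of constraints yields a tour of weight $A + \alpha B$ for constants $A,B$ depending only on the gadgets, and, conversely, any tour of weight $A + \beta B$ can be decoded into an assignment satisfying at least $\beta - o(1)$ of the constraints.

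The main obstacle I anticipate is purely quantitative: calibrating the gadget sizes so that the resulting hardness ratio is \emph{precisely} $320/321$ rather than the generic $(k-1)/k$ that comes out of a naive construction. One must balance the fixed backbone contribution $A$ against the per-constraint pickup contribution $B$ (possibly by taking multiple parallel copies of variable gadgets, or by composing with an expander-based amplification) so that $(A+sB)/(A+cB)$ equals exactly $320/321$. The choice of the starting CSP, its alphabet size, its predicate, and the H\r{a}stad gap $(c,s)$ have to be made jointly with this calibration --- this is where the Engebretsen--Karpinski analysis does its real work. Once the numerics line up, completing the proof is routine: verify that the YES-side of the CSP produces a tour meeting the upper weight threshold, and that any tour exceeding the lower threshold decodes to a CSP assignment contradicting the NO-side hardness, establishing NP-hardness of approximating Max ATSP with $\{0,1\}$ weights within $320/321 + \epsilon$.
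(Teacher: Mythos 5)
The first thing to say is that the paper has no proof of this statement for you to be compared against: Theorem~\ref{th3} is imported verbatim from Engebretsen and Karpinski \cite{EngebretsenK01} and is used purely as a black box to supply the gap in the reduction of Theorem~\ref{th4}. So what you have written is an attempted reconstruction of an external result, and it should be judged as such. On that score, your high-level plan does track the genuine argument: Engebretsen and Karpinski do reduce from a bounded-occurrence \emph{hybrid} system of linear equations mod $2$ (H{\aa}stad's Max-E3-Lin-2 hardness made bounded-occurrence via Berman--Karpinski-style expander amplification), with constant-size variable and equation gadgets assembled into a tour, and the bounded-occurrence hypothesis plays exactly the role you assign it.

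However, as a proof your proposal has two genuine gaps. First, the soundness direction is asserted rather than argued: in a complete digraph with $\{0,1\}$ weights, nothing prevents a tour from using any weight-$1$ edge it likes, so a constraint-gadget edge cannot be declared ``usable only when reached through the satisfying side'' --- that enforcement must emerge from the combinatorial structure of the gadgets, and the crux of every proof of this type is a normalization (or ``tour-repair'') lemma showing that an arbitrary near-optimal tour, which may enter and exit gadgets many times and traverse variable gadgets in neither of the two intended modes, can be locally modified into a consistent tour without decreasing its weight, after which decoding to an assignment is meaningful. Your step ``any tour of weight $A + \beta B$ can be decoded into an assignment satisfying at least $\beta - o(1)$ of the constraints'' is precisely this missing lemma, not a consequence of the construction you describe. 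Second, the constant $320/321$ is never derived: you explicitly defer the calibration of gadget sizes and the choice of starting gap $(c,s)$ to ``where the Engebretsen--Karpinski analysis does its real work,'' which concedes the quantitative content that distinguishes this theorem from a generic ``hard within $1-\delta$ for some $\delta>0$'' statement. The plan is the right plan, but without the gadget constructions, the repair lemma, and the arithmetic, it is a correct outline of someone else's proof rather than a proof.
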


The hardness result for the \textit{RCSstr[2]} problem is stated in
the following theorem.

\begin{theorem}
\label{th4} There exists a constant $\beta > 0$, such that the
\textit{RCSstr} problem is NP-Hard to approximate within a factor
of $1-\beta$, even if all the strings in $S$ have length two and
 $t$ is a set.
\end{theorem}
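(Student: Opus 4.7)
The plan is to give an approximation-preserving reduction from Max Asymmetric TSP with $\{0,1\}$ edge weights (Theorem~\ref{th3}). Given a complete directed graph $G = (V,E)$ with $|V|=n$ and weights $w : E \to \{0,1\}$, the natural construction is to take $t = V$ as a set of $n$ distinct characters (so $t$ is a set, as required) and $S = \{uv : w(u,v) = 1\}$. A permutation $\pi(t) = v_{\pi(1)} v_{\pi(2)} \cdots v_{\pi(n)}$ is then precisely a Hamiltonian path in $G$, and a length-two string $uv \in S$ appears as a substring of $\pi(t)$ iff the arc $(u,v)$ is traversed consecutively by the path. Thus the value of $\pi$ as an \textit{RCSstr[2]} solution equals the total weight of the corresponding Hamiltonian path.

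The main obstacle is the mismatch between Max ATSP (a Hamiltonian \emph{cycle}) and a permutation $\pi(t)$ (a Hamiltonian \emph{path}). I would close this gap with the elementary inequalities $\mathrm{OPT}_C - 1 \leq \mathrm{OPT}_P \leq \mathrm{OPT}_C$, where $\mathrm{OPT}_C$ is the optimal tour weight in $G$ and $\mathrm{OPT}_P$ is the optimal Hamiltonian-path weight (i.e. the optimal \textit{RCSstr[2]} value on our instance). The upper bound follows by closing any Hamiltonian path into a cycle via an arbitrary arc of weight $\geq 0$, and the lower bound by deleting the lightest arc from an optimal tour.

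To transfer hardness, I would use the standard fact that the hard instances of Theorem~\ref{th3} arise from a gap reduction whose completeness and soundness values are both $\Theta(n)$, so in particular $\mathrm{OPT}_C = \Omega(n)$. Assume, for contradiction, a polynomial-time $(1-\beta')$-approximation for \textit{RCSstr[2]} on instances of the above form. Applied to the constructed instance, it returns a Hamiltonian path of weight at least $(1-\beta')\mathrm{OPT}_P \geq (1-\beta')(\mathrm{OPT}_C - 1)$; closing it into a tour by appending any arc then yields a Max ATSP solution of weight at least $(1 - \beta' - O(1/n))\,\mathrm{OPT}_C$. Choosing any fixed $\beta' < 1/321$ and taking $n$ large enough contradicts Theorem~\ref{th3}, giving the desired constant $\beta > 0$ for which \textit{RCSstr[2]} with $t$ a set is NP-hard to approximate within $1-\beta$.
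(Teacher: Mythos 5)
Your proposal is correct and follows essentially the same route as the paper: the identical reduction from Max ATSP with $\{0,1\}$ weights (Theorem~\ref{th3}), taking $t=V$ and one length-two string per weight-$1$ arc, and handling the cycle-versus-path mismatch by the same elementary observation that the optimal path and tour weights differ by at most one unit, which is negligible since the hard instances have optimum $\Theta(n)$. Your version is phrased as an explicit approximation-preserving argument rather than the paper's gap-preserving formulation, but the content is the same.
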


\begin{proof}
We present a gap-preserving reduction from the maximum asymmetric
TSP to the \textit{RCSstr[2]} problem where $t$ is a set.

Given a complete directed graph $G = (V,E)$, with $|V| = n$, $|E|
= n(n-1)/2$ and edge weights of $0$ and $1$, we construct an
instance $(S,t)$ of the \textit{RCSstr[2]} problem in the following
way.

Set $t=V$ and for each arc $(a,b) \in E$ with
weight $1$ set a string $ab$ in $S$. Let $OPT(G)$ be the length of the optimal tour on the
graph $G$ and let $OPT(S,t)$ be the maximum number of strings from
$S$ which can be substrings of a permutation of $t$. In order to
have an inapproximability factor less than $1$, we also assume
that $n > 322$.

We now prove that the reduction presented is a gap-preserving
reduction. Specifically, we prove that:

$$ OPT(G) = n  \Rightarrow  OPT(S,t) = n-1  $$
$$ OPT(G) < (1-\alpha)n \Rightarrow OPT(S,t) < (1-\beta)(n-1) $$

where $\alpha > 0$ and $\beta > 0$ are constants which are defined
later. The permutation $v_1v_2 \dots v_n$ corresponding to a tour
of length $n$ contains $n-1$ strings from $S$ as substrings:
$v_1v_2, v_2v_3, \dots, v_{n-1}v_n$.  Therefore, the first
implication is true.

Suppose now that $OPT(G) < (1-\alpha)n$. Then, $OPT(S,t) <
(1-\alpha)n$, since a permutation of $t$ defines a path in the
graph, which is shorter than a tour. We want to find a constant
$\beta$ such that $(1-\alpha)n \leq (1-\beta)(n-1)$. The following inequality gives the desired.

$$\beta \leq 1 - \frac{1-\alpha}{1-\frac{1}{n}}$$

Therefore, if the maximum ATSP problem does not admit a
$1-\alpha$ approximation, then the \textit{RCSstr[2]} problem (even
in case that $t$ is a set) does not admit a $1-\beta$
approximation (the reader may refer to~\cite{Vaz04} for a more
detailed argument of this claim). From Theorem~\ref{th3}, we know
that is hard to approximate the Maximum Asymmetric Travelling
Salesman with $0$, $1$ edge weights within $320/321 + \epsilon$,
for any $\epsilon > 0$. Therefore, our problem is inapproximable
within 1 - $\beta \geq n(320/321 + \epsilon)/(n-1)$, for any
$\epsilon > 0$.


\qed

\end{proof}

We now show that even over a binary alphabet the \textit{RCSstr}
problem remains NP-Hard.

\begin{theorem}\label{th5}
If $|\Sigma| = 2$, then the \textit{RCSstr} problem is NP-Hard.
\end{theorem}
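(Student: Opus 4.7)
The plan is a polynomial-time Turing reduction from the shortest common superstring problem over a binary alphabet (SCS-binary). The NP-hardness of SCS-binary is classical and in any case strictly weaker than the approximation lower bounds of Ott~\cite{Ott99} cited in the introduction. Given an SCS-binary decision instance $(S,L)$, asking whether some binary string of length at most $L$ contains every $s \in S$ as a substring, reject immediately if any $s \in S$ is longer than $L$; otherwise, for every pair $(p,q)$ of nonnegative integers with $p+q=L$, form the binary RCSstr instance $(S,t_{p,q})$, where $t_{p,q}$ is the multiset consisting of $p$ copies of $0$ and $q$ copies of $1$, and query the binary RCSstr decision oracle as to whether its optimum equals $|S|$. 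Answer \emph{yes} to the SCS query iff some oracle call returns \emph{yes}.

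Correctness is straightforward. A permutation $\pi(t_{p,q})$ is precisely an arbitrary binary string of length $L$ with exactly $p$ zeros and $q$ ones, so a positive oracle answer on some $(p,q)$ directly exhibits a binary superstring of $S$ of length $L$. Conversely, if $w$ is such a superstring with $|w| = L' \le L$, padding $w$ on the right with arbitrary bits up to length $L$ preserves every substring of $w$ and yields a permutation of $t_{p,q}$ for the matching composition, so that oracle call returns \emph{yes}. Since only $L+1$ polynomial-size queries are made, binary RCSstr is NP-hard.

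The main obstacle, and the reason one sweeps over $(p,q)$ rather than making a single call, is that fixing the multiset $t$ inside an RCSstr instance fixes both the length and the composition of the candidate permutation, a constraint absent from the SCS formulation. If one insists on a Karp-style many-one reduction, the alternative would be to adapt the clique reduction of Theorem~\ref{th2} by assigning each vertex a binary codeword; the delicate point is then that the multiset structure of $t$ erases codeword boundaries under permutation, so additional gadget strings would be required in $S$ to enforce block alignment, which is considerably more intricate than the Turing reduction sketched above.
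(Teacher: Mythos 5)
Your proposal is correct and follows essentially the same route as the paper's own proof: a polynomial-time Turing reduction from binary shortest common superstring that sweeps over all multisets $0^p1^q$ of the appropriate total length and asks whether the resulting \textit{RCSstr} instance is ``complete'' (optimum equal to $|S|$). The only cosmetic difference is that you reduce from the decision version with a fixed bound $L$ using $L+1$ queries, whereas the paper searches all $i+j \leq n\ell$ to recover the optimal superstring directly; both are valid and equivalent in substance.
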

\begin{proof}
Let $\Sigma = \{0,1\}$. We prove that if we can solve the
\textit{RCSstr} problem on the alphabet $\Sigma$ in polynomial
time, then we can solve in polynomial time the shortest common
superstring problem on the alphabet $\Sigma$.

Consider a shortest common superstring instance $S$, where the
longest string has length $\ell$. It is easy to see that $s_1
\cdot s_2 \cdot \dots \cdot s_n$ is a superstring of all the
strings in $S$. Hence, the solution is no longer than $n\ell$. We
show that $O(n^2\ell^2)$ calls to \textit{RCSstr} are sufficient
to find the shortest common superstring of the given strings.

We name an \textit{RCSstr} instance $(S,t)$ \textit{complete}, if
all the strings of $S$ are substrings of the optimal solution
$\pi(t)$.

Note that there exists a string $x$ with $i$ $0$'s and $j$ $1$'s that is a common
superstring of all the strings in $S$ if and only if the
\textit{RCSstr} instance $(S,0^i1^j)$ is \textit{complete}.
Therefore, we want to find the shortest string $t$ such that the
\textit{RCSstr} instance $(S,t)$ is \textit{complete}. The
shortest common superstring is given by the permutation $\pi(t)$
returned by calling the \textit{RCSstr} on the instance $(S,t)$.
The number of multisets $0^i1^j$ where $i+j \leq n\ell$ is
$O(n^2\ell^2)$. Therefore we can call the \textit{RCSstr} on all
of them and we can find the shortest common superstring on the
given strings in polynomial time (note that this time can be improved somewhat by employing a binary search). The shortest common superstring
problem is NP-Hard and the theorem follows.

%
\qed
\end{proof}

\subsection{Approximating \textit{RCSstr}}
\label{RCS_approx}

In the this section we present approximation algorithms for two
variants of the \textit{RCSstr} problem.

We first present a $3/4$-approximation algorithm for the
\textit{RCSstr[2]} problem where each character of $t$ is unique.
Our algorithm follows immediately from the NP-Hardness reduction
presented in the previous section. Since each character in $t$ is
unique we can construct a complete directed graph $G=(V,E)$, with
$V = \Sigma$ as in the proof of Theorem~\ref{th4}. We then apply
the $3/4$ approximation algorithm for the \textit{Maximum ATSP}
and we obtain a cycle $t_{\pi(1)}, t_{\pi(2)}, \dots,
t_{\pi(n)},t_{\pi(1)}$ of total weight $k$, where
$\pi:\{1,\dots,n\} \to \{1,\dots,n\}$ is a permutation.

If, for some $i < n$, $t_{\pi(i)}t_{\pi(i+1)} \notin S$, we output
$t_{\pi(i+1)}t_{\pi(i+2)} \dots t_{\pi(n-1)}$
$t_{\pi(n)}t_{\pi(1)}t_{\pi(2)} \dots t_{\pi(i)}$, that contains
$k$ strings from $S$ as substrings (and yields an approximation
ratio of $3/4$). Otherwise, we output $t_{\pi(1)} t_{\pi(2)} \dots
$ $t_{\pi(n-1)}t_{\pi(n)}$ that contains exactly $n-1$ strings
from $S$ as substrings, which is optimal.

Here we present a simple $1/(\ell(\ell(\ell+1)/2 -
1))$-approximation algorithm for \textit{RCSstr[$\ell$]}.

The idea is output a concatenation of a maximal collection of
strings from $S$. One can observe that each of the $\ell$
characters of a string in our solution cannot be used by more than
$\ell(\ell+1)/2 - 1$ strings in the optimal solution. Therefore,
the algorithm yields a $1/(\ell(\ell(\ell+1)/2 -
1))$-approximation ratio. Formally, the algorithm is presented
below.

\begin{algorithm}[H]
\caption{A $1/(\ell(\ell(\ell+1)/2 - 1))$ approximation algorithm
for \textit{RCSstr[$\ell$]}} \label{alg_RCSk} Find a maximal subset
$S' = {s'_1, s'_2, \dots, s'_q} \subset S$ of strings under the
following constraint: there exists a permutation $\pi(t)$ of the
multiset such that $s'_1 \cdot s'_2\cdot \dots \cdot s'_q$ is a
prefix of $\pi(t)$. \\

\textit{Output:} $\pi(t)$
\end{algorithm}

\begin{theorem}
Algorithm~\ref{alg_RCSk} is a $1/(\ell(\ell(\ell+1)/2 -
1))$-approximation algorithm for $RCSstr[\ell]$.
\end{theorem}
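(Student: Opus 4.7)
The plan is to certify $|\mathrm{ALG}|\ge q$ (immediate, since $s'_1,\dots,s'_q$ appear as substrings of the output $\pi(t)$) and then to bound the optimum value by $k^*\le q\ell(\ell(\ell+1)/2-1)$, from which the ratio $q/k^*\ge 1/(\ell(\ell(\ell+1)/2-1))$ follows. First I would set up notation: write $T\subseteq t$ for the multiset of character instances consumed by the prefix $s'_1\cdots s'_q$, so $|T|\le q\ell$ since every $s'_i$ has length at most $\ell$; fix an optimum ordering $\pi^*(t)$ realizing $k^*$ and, for each $s\in S$ occurring in $\pi^*(t)$, pin down one occurrence using a specific sub-multiset of instances $I_s\subseteq t$.

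The heart of the argument will be a maximality step. For every $s\in\mathrm{OPT}\setminus S'$ I would show $I_s\cap T\neq\emptyset$: otherwise $I_s\subseteq t\setminus T$, so the character multiset of $s$ fits inside $t$ minus the prefix's characters, and $s'_1\cdots s'_q\cdot s$ is a valid prefix of some permutation of $t$, contradicting the maximality of $S'$. For $s\in S'\cap\mathrm{OPT}$, $s$ is already captured by the algorithm and I would charge it to the instance of $T$ that $s$ occupies inside the prefix itself. Thus every optimum string is charged to some instance of $T$.

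Finally I would bound how many strings land on one instance $c\in T$ at position $p^*$ in $\pi^*(t)$: a string charged to $c$ is either the unique $s'_i\in S'$ whose prefix occurrence contains $c$, or a substring of $\pi^*(t)$ of length at most $\ell$ whose occurrence contains $p^*$, and the latter yields at most $1+2+\cdots+\ell=\ell(\ell+1)/2$ candidates. To trim this to $\ell(\ell+1)/2-1$ I would peel off length-1 strings: any $s\in S$ of length one whose character lies in $t$ is a substring of \emph{every} permutation of $t$, so these strings appear identically in $\mathrm{OPT}$ and in the algorithm's output and cancel from the ratio. Restricting the per-instance count to length-$\ge 2$ substrings caps it at $2+3+\cdots+\ell=\ell(\ell+1)/2-1$, and multiplying by $|T|\le q\ell$ yields $k^*\le q\ell(\ell(\ell+1)/2-1)$. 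The main obstacle I expect is the maximality step, where the bookkeeping must be done at the level of labelled character instances of $t$ (not just characters) so that ``appending $s$ to the prefix'' is provably a legitimate operation rather than a multiset overflow.
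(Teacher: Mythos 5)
Your proposal is a rigorous rendering of exactly the argument the paper gives in three informal sentences: charge each optimal string to a character instance consumed by the greedy prefix, bound the number of such instances by $q\ell$, and bound the number of optimal strings charged to each instance by $\ell(\ell+1)/2-1$. Two things you make explicit that the paper does not are genuine improvements. First, the maximality step (every optimal string outside $S'$ must reuse an instance of $T$), which you correctly argue at the level of labelled instances via the contrapositive. Second, the origin of the ``$-1$'': the paper never says why the per-position count is $\ell(\ell+1)/2-1$ rather than $\sum_{j=1}^{\ell}j=\ell(\ell+1)/2$, and your peeling of length-one strings is in fact \emph{necessary} --- without it the claimed bound on the optimum is false (take $t=\{x,a,b,y\}$, $S=\{a,b,xa,ab,by\}$; then $S'=\{ab\}$ is maximal, $q=1$, yet $xaby$ realizes all $5>4=q\ell(\ell(\ell+1)/2-1)$ strings).

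There is one loose end in your final step. Strings in $S'\cap\mathrm{OPT}$ whose designated occurrence in $\pi^*(t)$ is disjoint from $T$ (possible when their characters repeat in $t$) are handled by your rule for $S'\cap\mathrm{OPT}$, contributing up to one extra charge per string of $S'$ on top of the $\ell(\ell+1)/2-1$ occurrence-based charges per instance. What your charging actually delivers is therefore $k^*_{\ge 2}\le q+q\ell(\ell(\ell+1)/2-1)$, and the sentence ``multiplying by $|T|\le q\ell$ yields $k^*\le q\ell(\ell(\ell+1)/2-1)$'' silently drops the additive $q$; as written you have proved a $1/(1+\ell(\ell(\ell+1)/2-1))$ ratio. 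The paper's own proof has the identical blind spot --- it counts only the ``other'' optimal strings blocked by each $s'_i$ and never accounts for the $s'_i$ that themselves lie in the optimum --- so this is a shared blemish rather than a defect unique to your write-up; but to get the constant exactly as stated you still owe an argument absorbing those at most $q$ extra strings.
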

\begin{proof}

Note that, a single character can be used simultaneously in at
most $\ell(\ell+1)/2 - 1$ strings of the optimal solution. Since
for every $s_i \in S$, $|s_i| \leq \ell$, we can conclude that a
single string in our solution can cause at most
$\ell(\ell(\ell+1)/2 - 1)$ other strings of the optimal solution not to be chosen. Thus, the size of the optimal solution is at
most $q(\ell(\ell(\ell+1)/2 - 1))$ and the approximation ratio
follows. \qed
\end{proof}

One tight example for the above analysis of
Algorithm~\ref{alg_RCSk} is the following: $t =
\{a,b,c,q,q,q,z,z,z,w,w,w,$ $x,x,x\}$, and $S$ = \{$abc$, $qa$,
$az$, $wqa$, $qaz$, $azx$, $qb$, $bz$, $wqb$, $qbz$, $bzx$, $qc$,
$cz$, $wqc$, $qcz$, $czx$\}. If we first select into the maximal
collection the string $abc$, then we cannot add any other string
to our solution. The optimal solution has size $15$ and consists
of all the other strings.

\begin{observation}
Given an $RCSstr[\ell]$ instance, if for every $s_i \in S$, $s_i$ is
not a substring of any other $s_j \in S$, then
Algorithm~\ref{alg_RCSk} is an $\ell^2$-approximation algorithm.
\end{observation}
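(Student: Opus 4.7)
The plan is to tighten the counting step used in the analysis of Algorithm~\ref{alg_RCSk}. The previous theorem bounded the number of OPT strings through any one character position of $\pi^*(t)$ by $\ell(\ell+1)/2 - 1$, one for each choice of length and offset of a covering substring. Under the additional hypothesis that no $s_i \in S$ is a substring of another $s_j \in S$, I would sharpen this per-position bound from $\ell(\ell+1)/2 - 1$ to $\ell$, which immediately improves the ratio from $\ell(\ell(\ell+1)/2-1)$ to $\ell \cdot \ell = \ell^2$.

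First, I would fix an optimal permutation $\pi^*(t)$ with the OPT strings embedded as substrings, and isolate a single character position $p$. The main new ingredient is to observe that the occurrence intervals $[a_i, b_i]$ of OPT strings covering $p$ form an antichain under containment: if $[a_i, b_i] \supseteq [a_j, b_j]$ with $i \neq j$, then reading off the corresponding substrings of $\pi^*(t)$ exhibits $s^*_j$ as a substring of $s^*_i$, contradicting the hypothesis on $S$. Since each such interval has length at most $\ell$ and contains $p$, every left endpoint lies in the $\ell$-element set $\{p-\ell+1, \ldots, p\}$; sorting by left endpoint, incomparability forces the right endpoints to be strictly increasing as well, so the left endpoints are distinct and the antichain has size at most $\ell$.

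Finally, I would plug this per-position bound into the accounting from the previous theorem. The concatenation $s'_1 \cdots s'_q$ produced by Algorithm~\ref{alg_RCSk} spans at most $q\ell$ character positions of the output permutation, each shared with at most $\ell$ OPT strings, so $\mathrm{OPT} \le q \ell^2$, yielding the claimed $\ell^2$-approximation ratio. The main obstacle is the interval-antichain step; once that is in place, the remainder is a direct substitution of the new per-position bound into the analysis already carried out for Algorithm~\ref{alg_RCSk}.
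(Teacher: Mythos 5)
Your proof is correct and follows essentially the same route as the paper's: a per-character bound of $\ell$ on the number of optimal strings simultaneously using any one symbol occurrence, multiplied by the at most $\ell$ characters consumed per greedy string. The only difference is that you supply the interval-antichain justification (distinct left endpoints among the at most $\ell$ positions $\{p-\ell+1,\dots,p\}$) for the per-character bound, which the paper simply asserts.
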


\begin{proof}
Note that, a single character can be used simultaneously in at
most $\ell$ strings of the optimal solution, thus, a single string
in our solution can stop at most $\ell^2$ other strings of the
optimal solution from being placed.\qed
\end{proof}

One can notice that, in case that all input strings are of length
$\ell$ the above observation must holds.

\section{\textit{RCSseq}}
\label{PCS}

We now turn to the \textit{RCSseq} problem. We first present
hardness results and lower bound for several variants of the
\textit{RCSseq} problem and then we present two approximation
algorithms.

\subsection{Hardness of the \textit{RCSseq} problem}
\label{PCS_hardness}

In the following theorem we show that the hardness result for the
general \textit{RCSstr} holds also to the \textit{RCSseq}.

\begin{theorem}
\label{th6} \textit{RCSseq} is {\em NP-complete} and hard to
approximate within a factor $n^{1-\epsilon}$, for any $\epsilon >
0$, unless P = NP.
\end{theorem}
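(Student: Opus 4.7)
The plan is to reuse verbatim the reduction from Theorem~\ref{th2}: given a graph $G = (V, E)$ with $V = \{v_1, \ldots, v_n\}$, build the RCSseq instance $(S, t)$ with $t = \{v_1^n, v_2^n, \ldots, v_n^n\}$ and $s_i = v_i^n \cdot d(v_i)$, where $d(v_i)$ is the ordered sequence of non-neighbors of $v_i$. Only the two directions of the correspondence need re-examination, since ``substring'' is weakened to ``subsequence''.

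The forward direction carries over essentially unchanged. Given a clique $A' = \{v_{i_1}, \ldots, v_{i_k}\}$ in $G$, concatenate $s_{i_1} \cdot s_{i_2} \cdots s_{i_k}$ and append the remaining characters of $t$ in arbitrary order; each $s_{i_j}$ is then a contiguous substring, and hence a subsequence, of the resulting $\pi(t)$. The character counts remain feasible for the same reason as in Theorem~\ref{th2}: because $A'$ is a clique, no $v_{i_j}$ ever appears inside $d(v_{i_l})$ for $l \neq j$, so each primary character is used exactly $n$ times and each remaining character at most $k \le n$ times.

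The nontrivial work lies in the reverse direction, where the ``at least $n+1$ occurrences'' argument of Theorem~\ref{th2} no longer applies since subsequence embeddings can interleave freely. My replacement is a last-position argument. Suppose $s_i$ and $s_j$ both embed as subsequences into $\pi(t)$ for some non-adjacent $v_i, v_j$. Because $\pi(t)$ contains exactly $n$ copies of $v_i$ and $s_i$ opens with $v_i^n$, the embedding of $s_i$ must use \emph{every} $v_i$-position of $\pi(t)$ for its prefix; the remaining characters of $s_i$, among them the single $v_j$ inherited from $d(v_i)$, must therefore be matched strictly to the right of the last $v_i$-position. The symmetric argument applied to $s_j$ forces a $v_i$-position strictly to the right of the last $v_j$-position, and these two orderings are mutually inconsistent. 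Hence the vertex set corresponding to any RCSseq solution is a clique, and the inapproximability of Theorem~\ref{th1} transfers in an approximation-preserving way, yielding both NP-completeness and the $n^{1-\epsilon}$ lower bound.

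The step I expect to be the main obstacle is precisely identifying what should replace the occurrence-counting lemma used in the substring setting; once one observes that the tightness of the multiset forces $s_i$'s prefix $v_i^n$ to consume every $v_i$-position in $\pi(t)$, the ``last-position'' contradiction falls out cleanly and no new gadgetry beyond the original reduction is required.
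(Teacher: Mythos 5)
Your proposal is correct and follows exactly the route the paper intends: the paper omits this proof entirely, stating only that it is ``similar to the proof of Theorem~\ref{th2},'' and you reuse that reduction verbatim. Your last-position argument is precisely the right replacement for the occurrence-counting step (which indeed breaks for subsequences, since the two embeddings may share positions), so you have in fact supplied the one nontrivial detail the paper leaves out.
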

\begin{proof}
Omitted (similar to the proof of Theorem~\ref{th2}).
\end{proof}

Moreover, we state that even over a binary alphabet the
\textit{RCSseq} problem remains NP-Hard.

\begin{theorem}\label{th7}
If $|\Sigma| = 2$, then the \textit{RCSseq} problem is NP-Hard.
\end{theorem}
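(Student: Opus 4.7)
The plan is to mimic the reduction used in Theorem~\ref{th5}, but now starting from the shortest common \emph{super\-sequence} problem over a binary alphabet, which is known to be NP-Hard~\cite{RaihaU81}. The structure should carry over almost verbatim, since the only property of ``superstring'' used in the proof of Theorem~\ref{th5} is that a binary superstring (resp.\ supersequence) of $S$ is determined up to a permutation of $t$ by its numbers of $0$'s and $1$'s.

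Concretely, given an instance $S = \{s_1, \ldots, s_n\}$ of shortest common super\-sequence over $\Sigma = \{0,1\}$, let $\ell$ denote the length of the longest string in $S$. Since $s_1 \cdot s_2 \cdots s_n$ is trivially a supersequence of every $s_i$, any shortest common supersequence has length at most $n\ell$. Call an \textit{RCSseq} instance $(S,t)$ \emph{complete} if every string in $S$ is a subsequence of some permutation $\pi(t)$. Then the key equivalence is that there exists a common supersequence of $S$ with exactly $i$ zeros and $j$ ones if and only if the \textit{RCSseq} instance $(S, 0^i 1^j)$ is complete; this holds because any ordering of the multiset $\{0^i, 1^j\}$ is a permutation of $t$, and an algorithm for \textit{RCSseq} returns one such permutation witnessing the property. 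Enumerating over all pairs $(i,j)$ with $i + j \leq n\ell$ gives $O(n^2 \ell^2)$ candidate multisets, and the minimum $i + j$ for which $(S, 0^i 1^j)$ is complete yields the length of a shortest common supersequence, while the permutation returned by \textit{RCSseq} yields the actual supersequence.

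Thus a polynomial-time algorithm for \textit{RCSseq} on binary inputs would give a polynomial-time algorithm for the shortest common supersequence problem on binary inputs, contradicting~\cite{RaihaU81} unless P = NP.

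The main obstacle, if any, is merely checking that the obvious ``guess the counts of $0$'s and $1$'s'' argument for strings is still valid for subsequences; this is immediate because the set of subsequences of $\pi(0^i 1^j)$ depends only on $i$, $j$, and the relative order of characters in $\pi(t)$, so maximizing over permutations of $t$ is exactly what \textit{RCSseq} already does. Everything else (bounding the supersequence length by $n\ell$, enumerating $O(n^2\ell^2)$ multisets, optional binary search on $i+j$) transfers mechanically from the proof of Theorem~\ref{th5}.
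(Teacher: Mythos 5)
Your proposal is correct and is essentially the argument the paper intends: the paper omits this proof, noting it is analogous to the binary-alphabet reduction for \textit{RCSstr} (the reference to Theorem~\ref{th3} there appears to be a typo for Theorem~\ref{th5}), and your adaptation --- replacing shortest common superstring with shortest common supersequence over $\{0,1\}$, which is NP-Hard by~\cite{RaihaU81}, and enumerating the $O(n^2\ell^2)$ multisets $0^i1^j$ --- is exactly that analogue. The key observation that a binary supersequence is determined by its counts of $0$'s and $1$'s together with an ordering, i.e.\ by a permutation of $0^i1^j$, carries over verbatim, so there is no gap.
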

\begin{proof}
Omitted (similar to the proof of Theorem~\ref{th3}).
\end{proof}

We now prove that \textit{RCSseq} is APX-Hard even if all the
input strings are of length two and $t$ is a set. To do so, we present an approximation-preserving reduction
from the classical \textit{maximum acyclic subgraph} problem.

\begin{definition}(Maximum Acyclic Subgraph)
Given a directed graph $G = (V,E)$ the {\em maximum acyclic
subgraph} problem is to find a subset $A$ of the arcs such that
$G' = (V,A)$ is acyclic and $A$ has maximum cardinality.
\end{definition}

\begin{theorem}
\label{th8}~\cite{PY:91} The Maximum Acyclic Subgraph problem is
\textit{APX-Complete}.
\end{theorem}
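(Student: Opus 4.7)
The plan is to establish APX-completeness in the standard two-part fashion: show that Maximum Acyclic Subgraph (MAS) admits a constant-factor approximation (so MAS is in APX) and then reduce a known APX-hard problem to MAS via a gap-preserving reduction.

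For APX membership, I would use the folklore $1/2$-approximation. Fix any ordering $\sigma$ of $V$. The forward arcs $F_\sigma = \{(u,v) \in E : \sigma(u) < \sigma(v)\}$ and the backward arcs $B_\sigma = E \setminus F_\sigma$ are each acyclic, since they are consistent with a total order on the vertices. Because $|F_\sigma| + |B_\sigma| = |E|$, one of them has size at least $|E|/2$, and the optimum is at most $|E|$; outputting the larger set therefore yields a $1/2$-approximation, placing MAS in APX.

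For APX-hardness, I would L-reduce from a problem already known to be APX-complete with bounded-size clauses, such as Max-E3-SAT with a bounded number of occurrences per variable. Given a formula $\phi$, I would build a directed graph using three kinds of gadgets wired together: (i) a truth-assignment gadget per variable, whose maximum acyclic subgraphs encode a single consistent Boolean value; (ii) a clause gadget per clause of $\phi$ whose internal arcs can all be retained in an acyclic subgraph if and only if at least one incident literal is oriented so as to satisfy the clause; and (iii) rigidity arcs of high multiplicity that force any large acyclic subgraph to respect the intended global topological order between variable and clause gadgets. The standard arithmetic then converts a $(1-\varepsilon)$-approximation for MAS into a $(1 - c\varepsilon)$-approximation for the SAT instance, ruling out a PTAS unless P = NP.

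The main obstacle is designing the clause and variable gadgets so that any optimal acyclic subgraph (a) decomposes cleanly into variable and clause contributions, and (b) loses only a constant-bounded number of arcs relative to the count induced by a satisfying assignment, so that the linear relationship required for an L-reduction holds globally. This balance between rigidity (forcing the intended semantics) and slack (allowing only a constant additive loss per clause) is precisely the combinatorial content worked out in~\cite{PY:91}, so in a complete proof I would invoke their construction rather than redo the gadget analysis from scratch.
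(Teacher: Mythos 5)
The paper offers no proof of this statement at all: it is imported as a known external result, supported only by the citation to~\cite{PY:91}, so there is no internal argument to compare yours against. Your outline is consistent with the standard proof. The membership half is correct and complete: fixing any ordering $\sigma$, the forward arcs and the backward arcs are each acyclic, they partition $E$, so the larger set has at least $|E|/2 \geq \mathrm{OPT}/2$ arcs, giving a $1/2$-approximation and hence membership in APX. The hardness half, however, is a plan rather than a proof: the variable gadgets, the clause gadgets, the rigidity arcs, and the verification of the two L-reduction inequalities are precisely the mathematical content of the theorem, and you explicitly defer all of it to~\cite{PY:91}. Since the paper itself does nothing more than cite that reference, your proposal is in effect the same move with a more informative gloss on what is being invoked; that is perfectly reasonable in context, but you should be clear that as a standalone argument the APX-hardness direction is not established by what you wrote --- it rests entirely on the construction you chose not to carry out. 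One further small caution: maximum acyclic subgraph is ordinarily posed on simple digraphs, so if your gadgets rely on ``arcs of high multiplicity'' you would additionally need to argue that parallel arcs can be simulated (e.g., by subdividing with fresh vertices) without breaking the linear relationship the L-reduction requires.
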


We can now present our hardness result.

\begin{theorem}
\label{th9} \textit{RCSseq} is APX-Hard even if all the strings in
$S$ have length two and $t$ is a set.
\end{theorem}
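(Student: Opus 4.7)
The plan is to give an approximation-preserving reduction from the Maximum Acyclic Subgraph problem, which is APX-complete by Theorem~\ref{th8}. The reduction is essentially identical in spirit to the one in Theorem~\ref{th4}: given a directed graph $G = (V, E)$ I would set $t = V$ (so $t$ is a set of $|V|$ distinct characters, one per vertex) and for every arc $(u, v) \in E$ include the length-two string $uv$ in $S$. This produces an instance of \textit{RCSseq[2]} in which $t$ is a set, as required.

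The key observation I would prove is that a permutation $\pi(t)$ is nothing other than a linear ordering of $V$, and a string $uv \in S$ is a subsequence of $\pi(t)$ if and only if $u$ appears before $v$ in this linear ordering. Hence the number of strings of $S$ realized as subsequences of $\pi(t)$ equals the number of arcs of $G$ that go ``forward'' in the ordering. I would then invoke the standard characterization of maximum acyclic subgraph: the set of forward arcs of any linear ordering of $V$ is acyclic, and conversely any acyclic subgraph $(V, A)$ has a topological order which extends to a linear ordering in which every arc of $A$ is forward. Thus the maximum over all permutations $\pi(t)$ of the number of strings from $S$ that are subsequences of $\pi(t)$ equals the size of a maximum acyclic subgraph of $G$.

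Because the optimal values of the two instances coincide (and the reduction is polynomial and maps feasible solutions back and forth with equal objective value), any $\alpha$-approximation for \textit{RCSseq[2]} with $t$ a set immediately yields an $\alpha$-approximation for Maximum Acyclic Subgraph. Theorem~\ref{th8} then gives the desired APX-hardness.

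There is no serious obstacle: the main point to check carefully is the equivalence between forward-arc counts and the acyclic subgraph optimum, and the fact that the mapping of solutions is constant-factor preserving (here in fact exactly value-preserving), so the reduction is an L-reduction and APX-hardness transfers.
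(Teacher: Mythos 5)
Your proposal is correct and follows essentially the same route as the paper: both reduce from Maximum Acyclic Subgraph by setting $t=V$ and adding a string $ab$ to $S$ for each arc $(a,b)\in E$, then using the correspondence between forward arcs of a linear ordering and acyclic subgraphs (via topological ordering) to show the optima coincide. Your write-up is in fact slightly more explicit than the paper's about why the reduction is value-preserving and hence approximation-preserving.
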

\begin{proof}
We present an approximation-preserving reduction from the maximum
acyclic subgraph problem. Given a directed graph $G = (V,E)$ we
construct an instance $(S,t)$ of the \textit{RCSseq} problem as follows. Set $t=V$ and for every arc $(a,b) \in E$ we add a string
$ab$ to $S$.

Let $\pi$ be a permutation of the set $t$ and let $A \subseteq S$ be
all the strings that are subsequences of $\pi(t)$. The
corresponding edge set $A$ is an acyclic subgraph of $G$. On the
other hand, let $A \subseteq E$ be an acyclic subgraph. Consider a
topological ordering of $(V,A)$. All strings corresponding to
edges $A$ are subsequences of $\pi(t)$ that corresponds to the
topological ordering.

Note that the optimal solution of the \textit{RCSseq} instance
$(S,t)$ has size $x$ if and only if the optimal solution of
maximum acyclic subgraph problem on the graph $G$ has size $x$.
Thus, the \textit{RCSseq} problem is APX-Hard. \qed

\end{proof}

In~\cite{GMR08} the following result is proven.

\begin{theorem}
The {\em maximum acyclic subgraph} problem is Unique-Games hard to
approximate within a factor better than the trivial $1/2$ achieved
by a random ordering.
\end{theorem}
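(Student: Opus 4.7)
The plan is to reduce from Unique Games to Maximum Acyclic Subgraph in the style of dictatorship-test based hardness results, following Khot's template. I would begin with a bipartite Unique Games instance $\mathcal{U}=(U,V,E,\{\pi_{uv}\},[R])$ whose edges carry permutations $\pi_{uv}:[R]\to[R]$, and build a Maximum Acyclic Subgraph instance on a vertex set partitioned into one block per Unique Games vertex, each block isomorphic to a long-code-like gadget indexed by $\{-1,1\}^R$. An ordering of the block of a Unique Games vertex $u$ is meant to encode a purported label $\ell(u)\in[R]$ via a \emph{dictator ordering} that sorts the block by the $\ell(u)$-th coordinate.

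First I would design a local ordering gadget: a distribution over arcs inside each block whose expected fraction of forward arcs under a dictator ordering tends to $1$, but under any ordering whose associated pairwise-comparison function has low influence on every coordinate is bounded by $\tfrac{1}{2}+\delta$. Then I would wire these gadgets across Unique Games edges using the permutations $\pi_{uv}$, so that consistency of the two endpoint block orderings (up to relabelling by $\pi_{uv}$) is rewarded. This essentially delivers completeness for free: a $(1-\eta)$-satisfying Unique Games labelling induces block orderings realising a $1-\eta-o(1)$ fraction of all arcs of the Maximum Acyclic Subgraph instance.

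The main obstacle is the soundness analysis, which is where the Unique-Games nature of the theorem really enters. I would need to argue that if a global ordering places a $\tfrac{1}{2}+\varepsilon$ fraction of the arcs forward, then in many blocks the associated comparison function has a large influence on some coordinate. Since orderings of a finite set do not form a product space, the natural approach is to represent an ordering by its $\pm 1$ pairwise comparisons $\mathrm{sgn}(x_i - x_j)$, express the forward-arc fraction as a bounded-degree polynomial in these Boolean variables, and then invoke a Mossel--O'Donnell--Oleszkiewicz style invariance principle to pass to a Gaussian model in which only dictator-like orderings beat $\tfrac{1}{2}$. A standard influence-decoding step then converts an influential coordinate in each block into a Unique Games labelling that satisfies noticeably more than the Unique Games soundness threshold, contradicting the Unique Games Conjecture. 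Tuning parameters so completeness approaches $1$ and soundness approaches $\tfrac{1}{2}$ yields the claimed Unique-Games hardness of beating the random-ordering bound.
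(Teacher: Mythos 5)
This theorem is not proved in the paper at all: it is an external result imported verbatim from~\cite{GMR08} (Guruswami, Manokaran and Raghavendra, ``Beating the random ordering is hard''), and the paper's ``proof'' is just the citation. So there is nothing in the paper to compare your argument against line by line; what I can do is compare your sketch to the known proof in~\cite{GMR08}. At the level of the template you have the right picture: a long-code/dictatorship-test reduction from Unique Games, blocks indexed by $[R]$-dimensional hypercube-like gadgets, dictator orderings for completeness, and an influence-decoding argument for soundness. One small point to fix in the completeness discussion: the completeness cannot literally be $1$, since a graph whose maximum acyclic subgraph is all of $E$ is a DAG and can be recognized (and solved) in polynomial time; the actual statement is a gap of $1-\epsilon$ versus $1/2+\epsilon$, which your ``tuning parameters so completeness approaches $1$'' is consistent with but should be stated carefully.

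The genuine gap is in the soundness step, and it is exactly the step you flag as the main obstacle. Encoding an ordering by its pairwise comparison bits $\mathrm{sgn}(x_i-x_j)$ does not put you in a position to apply an invariance principle: those $\binom{N}{2}$ Boolean variables are not independent --- they are constrained to be transitively consistent with a total order --- so the forward-arc fraction is a polynomial over a highly correlated, non-product distribution, and the Mossel--O'Donnell--Oleszkiewicz machinery does not apply to it as stated. The actual proof in~\cite{GMR08} resolves this differently: it \emph{coarsens} the ordering by bucketing the positions into a constant number $q$ of classes, so that each block's ordering induces a function from the gadget domain into the finite range $[q]$; the test is analyzed in terms of these bucketed functions (arcs between distinct buckets are decided by the buckets, ties within a bucket are charged as roughly half forward), and the invariance principle is applied to low-influence functions into $[q]$. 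Without this (or an equivalent device for taming the non-product structure of orderings), your soundness argument does not go through, and this device is precisely the technical contribution that made the~\cite{GMR08} result nontrivial. As written, your proposal is a plausible research plan rather than a proof.
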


The maximum acyclic subgraph is a special case of permutation
constraint satisfaction problem (permCSP). A permCSP of arity $k$
is specified by a subset $S$ of permutations on $\{1,2,\dots,k\}$.
An instance of such a permCSP consists of a set of variables $V$
and a collection of constraints each of which is an ordered
$k$-tuple of $V$. The objective is to find a global ordering
$\sigma$ of the variables that maximizes the number of constraint
tuples whose ordering (under $\sigma$) follows a permutation in
$S$. In~\cite{CGM09} Charikar, Guruswami and Manokaran prove the
following result.

\begin{theorem}
For \emph{every} \textit{permCSP} of arity $3$, beating the random
ordering is Unique-Games hard.
\end{theorem}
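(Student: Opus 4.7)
The plan is to follow the standard Unique-Games hardness framework: give a Long-Code style reduction from Unique Games to the permCSP, using a real-valued ``rank-function'' encoding of orderings in the spirit of the maximum acyclic subgraph hardness of~\cite{GMR08}. Starting from a UG instance with vertex set $W$, label alphabet $[R]$ and permutations $\pi_{uv}$, the permCSP variables are pairs $(u,x)$ with $u \in W$ and $x \in \{0,1\}^R$. Any global ordering $\sigma$ of these variables induces, for each $u$, a normalized rank function $f_u : \{0,1\}^R \to [0,1]$, and it is this analytic object on which the Fourier / influence analysis is performed. For each UG edge $(u,v,\pi_{uv})$ the test samples a correlated triple $x^{(1)}, x^{(2)}, x^{(3)} \in \{0,1\}^R$ (with a small ``smoothing'' noise) and imposes an arity-$3$ permCSP constraint on the six variables $(u,x^{(i)}), (v,\pi_{uv}(x^{(i)}))$ in the order $\sigma$ ranks them.

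For completeness, plugging in a dictator ordering $f_u(x) = x_{\ell(u)}$ induced by a good UG labeling $\ell$ makes the ordering of each sampled triple a uniformly random permutation, which lies in the prescribed subset $S$ of permutations with probability exactly $|S|/6$; the standard UG machinery then produces a permCSP instance on which the dictator-based ordering satisfies essentially $|S|/6$ of the constraints. The heart of the proof, and the main obstacle, is the soundness direction: if some ordering $\sigma$ satisfies more than $|S|/6 + \varepsilon$ of the constraints, one must extract a coordinate of non-trivial low-degree influence from some $f_u$, which then decodes to a UG label for~$u$.

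I would establish soundness via a Mossel--O'Donnell--Oleszkiewicz style invariance principle applied simultaneously to the three rank functions appearing in a constraint: replace the correlated Boolean samples by correlated Gaussians and argue that, whenever every $f_u$ has only low-influence coordinates, the ordering of the triple $(f_u(x^{(1)}), f_u(x^{(2)}), f_u(x^{(3)}))$ is uniform over $S_3$ up to $o(1)$. The arity-$3$ assumption is genuinely used here: for three jointly Gaussian variables with vanishing pairwise correlations, every ordering has probability exactly $1/6$, so the random-ordering bound $|S|/6$ falls out for \emph{every} allowed set $S$ and not just for specific ``nice'' predicates like the acyclic one. The main difficulties I anticipate are making the rank encoding compatible with an Efron--Stein / Fourier decomposition (likely by centering and truncating $f_u$ so that it becomes a bounded, mean-zero function), and obtaining the invariance bound uniformly across all six ordering events at once rather than event-by-event, since a single influential coordinate must suffice to explain a deviation in any of them.
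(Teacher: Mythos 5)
First, note that the paper does not prove this statement at all: it is quoted verbatim as a result of Charikar, Guruswami and Manokaran~\cite{CGM09}, so there is no in-paper proof to compare against. Your overall framework (a Long-Code/dictatorship-test reduction from Unique Games, rank-function encodings of orderings, influence decoding via an invariance principle) is indeed the framework of~\cite{GMR08} and~\cite{CGM09}, so the plan is aimed in the right direction. However, as written the proposal has a fatal gap in the completeness direction. You claim that the dictator ordering induces a \emph{uniformly random} permutation of each sampled triple and therefore satisfies the constraints with probability $|S|/6$ --- but $|S|/6$ is exactly the value achieved by a random ordering, i.e.\ exactly your soundness bound. A reduction in which the intended (dictator) solution does no better than the quasirandom solutions produces no completeness--soundness gap and proves nothing. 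The whole point of the dictatorship test is the opposite: the correlated triple $x^{(1)},x^{(2)},x^{(3)}$ must be designed so that dictators order it according to a permutation in $S$ with probability bounded strictly \emph{above} $|S|/6$ (ideally close to $1$), while only low-influence orderings are stuck near $|S|/6$. This is not a presentational slip; designing that test distribution is a substantial part of the actual argument.

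Beyond that, the hardest technical ingredient of the ordering-CSP hardness results is missing rather than merely deferred: an ordering of $N$ variables is a function into an unbounded range, so the invariance principle and influence machinery do not apply directly. Both~\cite{GMR08} and~\cite{CGM09} first prove a ``bucketing'' lemma showing that coarse orderings into a constant number of blocks approximate the value of the instance, which reduces the problem to a bounded-alphabet CSP where the Mossel--O'Donnell--Oleszkiewicz technology applies; your ``centering and truncating $f_u$'' remark does not substitute for this step. Finally, your claim that vanishing pairwise correlations of the limiting Gaussians forces each of the six orderings to have probability $1/6$ is not quite what is needed (and the correlations do not vanish --- the test inputs are correlated by design); what one actually uses is exchangeability of the limiting Gaussian vector, which must be engineered into the test distribution. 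In short: right skeleton, but the two load-bearing components --- a gap-creating test distribution and the discretization of orderings --- are absent, and the completeness claim as stated would make the reduction vacuous.
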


Our problem corresponds a \textit{permCSP} where $S$ contains only
the identical permutation. Therefore we can conclude the
following.

\begin{theorem}
$RCSseq[2]$ is Unique-Games hard to approximate within a factor better than $1/2$.
\end{theorem}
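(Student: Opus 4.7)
The plan is to combine the value-preserving reduction from the proof of Theorem~\ref{th9} with the Unique-Games hardness of \textit{Maximum Acyclic Subgraph} (MAS) restated just above the target statement. The core observation is that these two facts, once glued together, immediately yield the desired $1/2$ lower bound, so the work reduces to checking that the reduction is not merely optimum-preserving but approximation-preserving.

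First, I would revisit the reduction from MAS to \textit{RCSseq[2]}. Given a directed graph $G=(V,E)$, set $t=V$ and $S=\{ab \mid (a,b)\in E\}$. The two directions established in the proof of Theorem~\ref{th9} already give a tight bijection between feasible solutions: any permutation $\pi(t)$ that realizes $k$ of the strings in $S$ as subsequences defines an acyclic arc set of size exactly $k$, and conversely any acyclic subgraph of size $k$ yields, via any topological ordering of its vertices, a permutation of $t$ containing $k$ strings as subsequences. Hence the reduction preserves the objective value on the nose in both directions, so any $\alpha$-approximation for \textit{RCSseq[2]} can be turned into an $\alpha$-approximation for MAS without loss. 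This is the step I expect to do most carefully, although it is essentially read off from the Theorem~\ref{th9} proof.

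Second, I would invoke the quoted GMR08 result: MAS cannot be approximated within any factor better than $1/2$ assuming the Unique Games Conjecture. Transferring this hardness through the above approximation-preserving reduction immediately gives the same $1/2$ lower bound for \textit{RCSseq[2]}.

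As an equivalent reformulation, one can phrase the argument in the \emph{permCSP} vocabulary introduced earlier: \textit{RCSseq[2]} is exactly the arity-$2$ permutation CSP in which the only accepted permutation is the identity, since the requirement that $ab \in S$ be a subsequence of $\pi(t)$ is precisely that the position of $a$ precedes that of $b$ in $\pi(t)$. This permCSP is identical to MAS, so the GMR08 UG-hardness applies directly without any loss. There is no substantive obstacle beyond bookkeeping: because the reduction is value-preserving with equal optimal sizes on both sides, no rescaling, padding, or gap amplification is required, and the $1/2$ factor transfers verbatim.
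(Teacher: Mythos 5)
Your proposal is correct and matches the paper's (largely implicit) argument: the paper justifies this theorem by identifying \textit{RCSseq[2]} with the arity-2 permCSP accepting only the identity permutation, i.e.\ with maximum acyclic subgraph, and then invoking the GMR08 Unique-Games hardness --- which is exactly the value-preserving reduction from Theorem~\ref{th9} that you spell out. No substantive difference.
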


\begin{theorem}
$RCSseq[3]$ is Unique-Games hard to approximate within a factor better than $1/6$.
\end{theorem}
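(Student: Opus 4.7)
The plan is to mirror the argument just given for the $1/2$ bound on $RCSseq[2]$, but now at arity $3$. The key observation is that $RCSseq[\ell]$, restricted to the case where $t$ is a set and every string in $S$ consists of $\ell$ distinct characters from $t$, is literally a permCSP of arity $\ell$ whose allowed permutation set is the singleton $\{\mathrm{id}\}$ containing only the identity on $\{1,\dots,\ell\}$: an ordering of $t$ is an ordering of the variables, and a string is a subsequence of $\sigma(t)$ iff the corresponding ordered tuple of variables is arranged in identity order under $\sigma$.

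Concretely, I would give an approximation-preserving reduction from the arity-$3$ permCSP with allowed set $\{\mathrm{id}\}$ to $RCSseq[3]$. Given a permCSP instance with variable set $V$ and constraints specified as ordered triples $(v_i,v_j,v_k)$ of distinct variables, set $t=V$ and, for each constraint, add the length-$3$ string $v_iv_jv_k$ to $S$. For any ordering $\sigma$ of $V$, the string $v_iv_jv_k$ is a subsequence of $\sigma(t)$ exactly when $v_i$ precedes $v_j$ which precedes $v_k$ in $\sigma$, i.e., exactly when the triple sits in identity order. Thus the number of permCSP constraints satisfied by $\sigma$ equals the number of strings in $S$ covered as subsequences of $\sigma(t)$, so the reduction is approximation-preserving on the nose.

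Now I would invoke the Charikar--Guruswami--Manokaran theorem quoted just above on this specific arity-$3$ permCSP. A uniformly random ordering of $V$ satisfies any given ordered triple with probability $1/3! = 1/6$, so under the Unique Games Conjecture no polynomial-time algorithm can beat ratio $1/6$ for this permCSP. Pulling that lower bound back through the bijection above gives the claimed $1/6$ UG-hardness for $RCSseq[3]$. There is essentially no technical obstacle in the reduction itself; the entire content is absorbed into the CGM09 theorem, and the only bookkeeping to verify is that the output instance lies inside $RCSseq[3]$ (immediate, since $t$ is a set and all produced strings have length exactly $3$) and that the exact per-constraint bijection transports the approximation ratio without loss.
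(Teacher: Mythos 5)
Your proposal is correct and is exactly the argument the paper intends: the paper derives this theorem by observing that $RCSseq[3]$ with $t$ a set is precisely a permCSP of arity $3$ whose allowed set is the identity permutation, and then invokes the Charikar--Guruswami--Manokaran result that beating the random ordering (which here achieves $1/3!=1/6$) is Unique-Games hard. You have merely written out the identification explicitly, which the paper leaves as a one-line remark.
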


Currently there is an unpublished result by Charikar, H{\aa}stad
and Guruswami stating that every $k$-ary $permCSP$ is
approximation resistant. This implies that $RCSseq[\ell]$ cannot
have an approximation algorithm better than $1/\ell!$.

\subsection{Approximating \textit{RCSseq}}
\label{PCS_approx}

In the this subsection we present a $(1 + \Omega(1 /
\sqrt{\Delta}))/2 $ approximation algorithm for the \textit{RCSseq[2]}
problem where $\Delta$ is the number of occurrences of the most
frequent character in $S$. We also present a simple randomized
approximation algorithm which achieves an approximation ratio of
$1/\ell!$.

\begin{theorem}\cite{BS:90}
\label{th10} The maximum acyclic subgraph problem is approximable
within $(1 + \Omega(1 / \sqrt{\Delta}))/2$, where $\Delta$ is
the maximum degree of a node in the graph.
\end{theorem}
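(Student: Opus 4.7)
The plan is to exhibit, for any directed graph $G=(V,E)$ with $|E|=m$ and maximum degree $\Delta$, a permutation of $V$ in which at least $m/2 + \Omega(m/\sqrt{\Delta})$ edges point forward. Since $\mathrm{OPT} \le m$ trivially, such an ordering immediately delivers the claimed $(1+\Omega(1/\sqrt{\Delta}))/2$ approximation ratio.

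The baseline $m/2$ is the folklore observation that a uniformly random permutation places each edge forward with probability exactly $1/2$; beating it is the content of the theorem. I would analyse the following greedy peeling procedure: while the current graph $H$ is nonempty, select a vertex $v$ that maximises $|d^+_H(v)-d^-_H(v)|$; if $d^+_H(v)\ge d^-_H(v)$ append $v$ to the right of the output, otherwise prepend it to the left, then delete $v$ and its incident edges from $H$. A vertex $v$ peeled when its residual out- and in-degrees are $d^+$ and $d^-$ contributes exactly $\max(d^+,d^-) = (d^++d^-)/2 + |d^+-d^-|/2$ correctly oriented edges to the final permutation, because every still-present edge at $v$ lands on the chosen side of the growing ordering. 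Summed over the execution the terms $(d^++d^-)/2$ telescope to $m/2$, so the total number of forward edges equals $m/2 + \tfrac12 \sum_v |d^+_H(v)-d^-_H(v)|$ where each summand is evaluated at the snapshot in which $v$ is peeled.

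The crux, and the step I expect to be the main obstacle, is to lower bound this accumulated discrepancy by $\Omega(m/\sqrt{\Delta})$. The plan is a second-moment / anticoncentration argument: a vertex of residual degree $d$ has orientation discrepancy of order $\sqrt{d}$ on average, and the greedy rule selects the vertex of \emph{maximum} discrepancy at each step, so the peeled vertex contributes $\Omega(\sqrt{d})$. Using $\sqrt{d}\ge d/\sqrt{\Delta}$ for every residual degree $d\le \Delta$ and summing over all peel steps then yields $\Omega(m/\sqrt{\Delta})$ extra forward edges. The delicate part is to make sure that the pointwise $\sqrt{d}$ estimate genuinely survives the adaptive deletions — that the residual graph $H$ cannot remain pathologically "balanced" at every vertex throughout the entire execution — so that the $\Omega(1/\sqrt{\Delta})$ gain accumulates globally instead of evaporating as $H$ shrinks. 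Resolving this via a careful amortised/potential argument on the sequence of residual graphs is where the bulk of the technical work will lie.
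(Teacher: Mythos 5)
The paper does not prove this statement at all --- it is quoted verbatim from Berger and Shor \cite{BS:90} and used as a black box --- so the only question is whether your reconstruction is sound on its own. The bookkeeping in your second paragraph is fine (modulo a sign slip: if you \emph{append} $v$ to the right when $d^+\ge d^-$, the kept out-edges of $v$ point \emph{backwards} in the final order, since every later-peeled vertex ends up to the left of $v$; the rule should place a vertex with out-majority at the leftmost free slot and one with in-majority at the rightmost). Each edge is counted once, at the peel time of its first-deleted endpoint, and the $\max(d^+,d^-)=(d^++d^-)/2+|d^+-d^-|/2$ decomposition does telescope to $m/2$ plus half the accumulated discrepancy. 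Up to there the argument is correct.

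The genuine gap is exactly where you flag it, and the heuristic you offer to bridge it is false as stated. For a \emph{fixed} input graph there is no randomness left, so ``a vertex of residual degree $d$ has orientation discrepancy of order $\sqrt d$ on average'' has no probabilistic content: in any Eulerian digraph (e.g.\ an Eulerian orientation of $K_n$, or a union of directed cycles) \emph{every} vertex has discrepancy exactly $0$ at the start, and your greedy's first step gains nothing over $d/2$; nothing you have written rules out the residual graph staying nearly balanced long enough to kill the $\Omega(m/\sqrt\Delta)$ total. The anticoncentration statement you are reaching for is the one Berger and Shor actually prove, but it lives in a different algorithm: they process the vertices in a \emph{random} order (equivalently, pair up in- and out-edges at each vertex and argue about the unpaired residue), so that when $v$ is peeled its surviving neighbours form a random-like subset and $\mathbf{E}\bigl[|d^+-d^-|\bigr]=\Omega(\sqrt d)$ follows from a binomial second-moment/anticoncentration bound; the $\Omega(\sqrt d)\ge d/\sqrt\Delta$ step and the summation to $\Omega(m/\sqrt\Delta)$ then go through in expectation. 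Your deterministic max-discrepancy greedy is a different algorithm for which this expectation argument does not apply, and no substitute potential argument is supplied. As it stands the proposal establishes only the trivial $m/2$ baseline, i.e.\ a $1/2$-approximation, not the claimed $(1+\Omega(1/\sqrt{\Delta}))/2$.
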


Given a multiset $t$, let $P'$ be the set of characters that have
a single occurrence in $t$ and let $P$ be $\Sigma \backslash
P'$, where $\Sigma$ is the alphabet of $t$. We define $Q$ to be the following multiset. For every $\sigma
\in P$, if $\sigma$ has $r$ occurrences in $t$, then $\sigma$ has
$r - 2$ occurrences $Q$.

\begin{algorithm}[H]
\caption{A $(1 + \Omega(1 / \sqrt{\Delta}))/2$ approximation
algorithm for \textit{RCSseq2}} \label{alg1}
\begin{enumerate}
\item Given a multiset $t$, construct a graph $G = (V,E)$ such
that:\\ $v_i \in V$ iff $v_i \in P'$ and $(a,b) \in E$ iff $a,b
\in P'$ and $ab \in S$.

\item Apply the $(1 + \Omega(1 / \sqrt{\Delta}))/2$ approximation
algorithm for the maximum acyclic subgraph to the graph G. Denote
the output subgraph by $G'(V,E')$.


\item Let $F'$ be a topological order of the vertices of $G'$.\\
Let $F$ and $F''$ be an arbitrary ordering of $P$ and $Q$
respectively.

\item Output $F\cdot F'\cdot F\cdot F''$.
\end{enumerate}
\end{algorithm}

Figure~\ref{fig2} is an example of Algorithm~\ref{alg1}. In the
first stage we construct a graph according to the first two steps,
note that $P=\{e\}$, $P' = \{a,b,c,d\}$ and $Q=\emptyset$. Then we
present an acyclic directed subgraph and we output $F \cdot
F'\cdot F \cdot F''$, where $F=e$ and $F'=cadb$.

\begin{figure}
\begin{center}
\epsfig{file=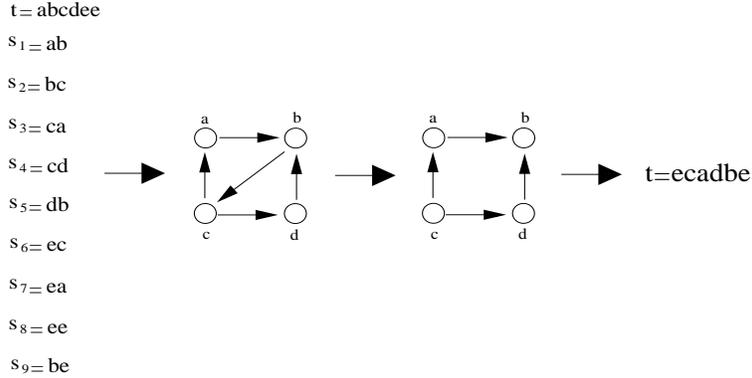,height=2.0in,width=4.0in}\\
\caption{\sf Algorithm~\ref{alg1} example.}\label{fig2}
\end{center}
\end{figure}

\begin{theorem}
\label{th11} Algorithm~\ref{alg1} is a $(1 + \Omega(1 /
\sqrt{\Delta}))/2$ approximation algorithm for the \textit{RCSseq[2]}
problem, where $\Delta$ is the maximum number of occurrences of a
character in the set $S$.
\end{theorem}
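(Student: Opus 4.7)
The plan is to split $S$ based on whether the string uses any character from $P$ (those with multiple occurrences in $t$) or lies entirely in $P'$ (characters occurring once). Let $S_1 = \{ab \in S : a,b \in P'\}$ and $S_2 = S \setminus S_1$. I would show that the algorithm always captures every string of $S_2$, and captures exactly the edges of the subgraph $G'$ among the strings of $S_1$. Then I would establish the matching upper bound $OPT \leq |S_2| + M$, where $M$ is the size of the maximum acyclic subgraph of $G$, and combine the two using the guarantee of Theorem~\ref{th10}.

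For the algorithm's side, a routine counting check confirms that $F \cdot F' \cdot F \cdot F''$ is a permutation of $t$: each $\sigma \in P'$ contributes one copy via $F'$, and each $\sigma \in P$ with $r$ occurrences in $t$ contributes $1 + 1 + (r-2) = r$ copies via the two $F$'s and $F''$. A short case analysis handles $S_2$: if $a \in P$, then the first $F$ contains $a$ and precedes $F'$, the second $F$, and $F''$, so whichever block contains $b$ completes the subsequence; symmetrically, if $b \in P$, the second $F$ contains $b$ and follows every earlier occurrence of $a$. For $ab \in S_1$ both characters appear only inside $F'$, so $ab$ is a subsequence iff $a$ precedes $b$ in the topological order $F'$, which holds iff $(a,b) \in E'$. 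The degree of any vertex in $G$ is bounded by the number of occurrences of the corresponding character in $S$, hence by $\Delta$, so Theorem~\ref{th10} yields $|E'| \geq \tfrac{1}{2}(1 + \Omega(1/\sqrt{\Delta})) \cdot M$.

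To bound $OPT$ I would observe that in any permutation $\pi(t)$ each character of $P'$ appears exactly once, so $\pi(t)$ induces a total order on $P'$; the strings of $S_1$ captured by $\pi(t)$ correspond to edges of $G$ consistent with this order and therefore form an acyclic subgraph, so at most $M$ of them can be captured. The trivial bound $|S_2|$ for the remaining strings gives $OPT \leq |S_2| + M$. Combining with $ALG \geq |S_2| + \tfrac{1}{2}(1+c) M$ for $c = \Omega(1/\sqrt{\Delta})$ and using $(1+c)/2 \leq 1$, one obtains $ALG \geq \tfrac{1+c}{2}(|S_2| + M) \geq \tfrac{1+c}{2} \cdot OPT$. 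The main obstacle is the $OPT$ upper bound: one must argue that the multiset structure of $t$ gives the optimum no leverage on $S_1$, so the acyclic-subgraph barrier on the $P'$-part applies to the optimum exactly as it does to the algorithm; once this is in place, the remaining steps are bookkeeping.
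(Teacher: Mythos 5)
Your proof takes essentially the same route as the paper's: the same split of $S$ according to whether both characters lie in $P'$, the same observation that any string touching $P$ is automatically a subsequence of $F\cdot F'\cdot F$, and the same identification of the $P'$-only strings captured with the arcs kept by the acyclic-subgraph algorithm. The paper's own proof is terser --- it omits the verification that the output is a permutation of $t$, the upper bound $OPT \le |S_2| + M$ via the total order that any permutation induces on $P'$, and the final arithmetic combining the two parts --- and the details you supply are exactly the right ones.
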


\begin{proof}
Given a string $ab \in S$. If $a \in P$ or $b \in P$ (or both),
then $ab$ is always a subsequence of $F \cdot F'\cdot F$.
Otherwise, if both $a$ and $b$ appear only once in $t$, then $ab$
is a subsequence of $F \cdot F'\cdot F$ if only if the edge
$(a,b)$ is selected in the arc set of the maximum acyclic
subgraph. Since the maximum acyclic subgraph problem has an
approximation ratio of $(1 + \Omega(1 / \sqrt{\Delta}))/2$, the
same approximation ratio holds for \textit{RCSseq2} problem.  \qed
\end{proof}

We now deal with \textit{RCSseq[$\ell$]} instances. We show that
selecting an arbitrary permutation $\pi(t)$ achieves an expected
approximation ratio of $\frac {1}{\ell!}$.

We define by $P(s_i,\pi(t))$ the probability that a string $s_i
\in S$ is a subsequence of a permutation $\pi(t)$.

Note that, $P(s_i,\pi(t)) \geq \frac{{|t| \choose
\ell}(|t|-\ell)!}{|t|!} = \frac{1}{\ell!}$. Therefore, the
expected number of strings from $S$ to be subsequences of an
arbitrary permutation $\pi(t) \geq \frac{|S|}{\ell!}$. Thus,
selecting an arbitrary permutation $\pi(t)$ achieves an expected
approximation ratio of at least $\frac{|S|}{|S|\ell!}= \frac
{1}{\ell!}$.



\bibliography{bib-latest}
\bibliographystyle{plain}

\end{document}